\setlist{nolistsep}
\begin{document}

\theoremstyle{plain}
\newtheorem{theorem}{Theorem}
\newtheorem{lemma}[theorem]{Lemma}
\newtheorem{corollary}[theorem]{Corollary}
\newtheorem{conjecture}[theorem]{Conjecture}
\newtheorem{proposition}[theorem]{Proposition}

\theoremstyle{definition}
\newtheorem{definition}{Definition}

\theoremstyle{remark}
\newtheorem*{remark}{Remark}
\newtheorem{example}{Example}

\def\be{\begin{equation}}
\def\ee{\end{equation}}
\def\ba{\begin{align}}
\def\ea{\end{align}}

\newcommand{\mE}{\mathcal{E}}
\newcommand{\mU}{\mathcal{U}}
\newcommand{\mA}{\mathcal{A}}
\newcommand{\mF}{\mathcal{F}}
\newcommand{\mI}{\mathcal{I}}
\newcommand{\mH}{\mathcal{H}}
\newcommand{\mL}{\mathcal{L}}
\newcommand{\mM}{\mathcal{M}}
\newcommand{\mT}{\mathcal{T}}
\newcommand{\mN}{\mathcal{N}}

\newcommand{\fm}{\mathcal{F}_{\bf{m}}}
\newcommand{\am}{\mathcal{A}^{\textbf{m}}}
\newcommand{\dm}{\mathcal{D}(\mathrm{H}_{{\bf m}})}
\newcommand{\lr}{\rangle\langle}
\newcommand{\la}{\langle}
\newcommand{\ra}{\rangle}
\newcommand{\tr}{{\rm Tr}}

\newcommand{\mc}[1]{\mathcal{#1}}
\newcommand{\mbf}[1]{\mathbf{#1}}
\newcommand{\mbb}[1]{\mathbb{#1}}
\newcommand{\mrm}[1]{\mathrm{#1}}

\newcommand{\bra}[1]{\langle #1|}
\newcommand{\ket}[1]{|#1\rangle}
\newcommand{\braket}[3]{\langle #1|#2|#3\rangle}
\newcommand{\ip}[2]{\langle #1|#2\rangle}
\newcommand{\op}[2]{|#1\rangle \langle #2|}

\newcommand{\mbN}{\mathbb{N}}

\definecolor{eric}{rgb}{0,.5,.2}
\newcommand{\eric}[1]{{\color{eric} #1}}

\newcommand{\review}[1]{{\color{red} #1}}
\topmargin=-15mm\oddsidemargin=-2mm\textwidth=164mm\textheight=240mm
\def\ra{\rangle}
\def\la{\langle}
\def\ot{\otimes}
\def\be{\begin{equation}}
\def\ee{\end{equation}}
\def\ba{\begin{array}}
\def\ea{\end{array}}
\def\t{\tilde}
\def\Nb{{I\!\! N}}
\def\Rb{{I\!\! R}}
\def\Zb{{Z\!\!\! Z}}
\def\Fb{{I\!\! F}}
\def\Cb{{\Bbb C}}
\def\cb{{\Bbb C}}

\baselineskip=18pt

\title {Quantum coherence quantifiers based on the R\'{e}nyi $\alpha$-relative entropy}
\author{Lian-He Shao}
\affiliation{College of Computer Science, Shaanxi Normal University, Xi'an 710062,
China}
\author{Yongming Li}
\email{liyongm@snnu.edu.cn}
\affiliation{College of Computer Science, Shaanxi Normal University, Xi'an 710062,
China}
\author{Yu Luo}
\affiliation{College of Computer Science, Shaanxi Normal University, Xi'an 710062,
China}
\author{Zhengjun Xi}
\affiliation{College of Computer Science, Shaanxi Normal University, Xi'an 710062,
China}

\begin{abstract}
The resource theories of quantum coherence attract a lot of attention in recent years. Especially, the monotonicity property plays a crucial role here. In this paper we investigate the monotonicity property for the coherence measures induced by the R\'{e}nyi $\alpha$-relative entropy which present in [Phys. Rev. A 94, 052336, 2016]. We show that the  R\'{e}nyi $\alpha$-relative entropy of coherence does not in general satisfy the monotonicity requirement under the subselection of measurements condition and it also does not satisfy the extension of monotonicity requirement which presents in [Phys. Rev. A 93, 032136, 2016]. Due to the R\'{e}nyi $\alpha$-relative entropy of coherence can act as a coherence monotone quantifier, we examine the trade-off relations between coherence and mixedness. Finally, some properties for the single qubit of R\'{e}nyi $2$-relative entropy of coherence are derived.
\end{abstract}

\maketitle

\section{Introduction}

Coherence arising from quantum superposition rule, is an important resources in quantum information theory. Coherence is discussed in the interference phenomena and it's know due to the role of phase coherence in optical phenomena~\cite{Mandel1}. A rigorous framework for quantifying coherence was proposed by Baumgratz $et.al.$ and they proposed several measures of coherence which are based on information distance measures including relative entropy and $l_1$ norm~\cite{Plenio1}. The quantification framework of quantum coherence stimulated many further considerations which include other coherence measures~\cite{Rana1,Streltsov1,Shao1}, the operational interpretations of quantum coherence~\cite{Napoli1,Singh2,Winter1}, the relationship between quantum entanglement, quantum discord and quantum deficit~\cite{Peng2,Xi1,Chitambar1,Yao1,Ma1}, quantification of coherence in infinite dimensional system~\cite{Zhang1,Xu1}, the other properties are similar to quantum entanglement theory~\cite{Liu15,Fugang1,Chen1,Chitambar3,Radhakrishnan1,Mani1,Peng1,haijun1,Yu1,Bu1,Wang1,Piani1,Bromley1,Du1}

From the view point of the definition, one can straightforwardly quantify the coherence in a given basis by measuring the distance between the quantum state $\rho$ and its nearest incoherent state. Baumgratz \emph{et al.} give four necessary criteria~\cite{Plenio1} which any quantity should fulfill them. Given a finite-dimensional Hilbert space $\mathcal{H}$ with $d=dim(\mathcal{H})$. We note that $\mathcal{I}$ is the set of quantum states which is called incoherent state that are diagonal in a fixed basis $\{|i\rangle\}_{i=1}^d$,$\{K_n\}$ is a set of Kraus operators,and satisfies $\sum_n K_n^{\dagger}K_n= \mathbb{I}$ with $K_n \mathcal{I} K_n^\dagger\subset \mathcal{I}$. Then any proper measure of the coherence $C$ must satisfy the following conditions:

(C1) $C(\rho)\geq 0$ for all quantum states $\rho$, and $C(\rho)=0$ if and only if $\rho\in \mathcal{I}$.

(C2a) Monotonicity under all the incoherent completely positive and trace preserving (ICPTP) maps $\Phi$: $C(\rho)\geq C(\Phi(\rho)) $, where $\Phi(\rho)=\sum_n K_n \rho K_n^\dag$.

(C2b) Monotonicity for average coherence under subselection based on measurements outcomes: $C(\rho)\geq \sum_n p_n C(\rho_n) $, where $\rho_n=\frac{K_n\rho K_n^\dag}{p_n}$ and $p_n=\mathrm{Tr}(K_n \rho K_n^\dag)$.

(C3) Non-increasing under mixing of quantum states: $\sum_n p_n C(\rho_n)\geq C(\sum_n p_n \rho_n )$ for any ensemble $\{p_n, \rho_n\}$.

The R\'{e}nyi entropy is important in quantum information theory. It can be used as a measure of entanglement~\cite{Wang4}. In Ref.~\cite{Hiai1}, Mosonyi and Hiai define the R\'{e}nyi $\alpha-$relative entropy, which can act as an information distance measure. In~\cite{Chitambar4}, Chitambar \emph{et al.} propose that the R\'{e}nyi $\alpha-$relative entropy of coherence fulfills condition C1 and C2a for $\alpha\in [0,2]$, then we call the R\'{e}nyi $\alpha-$relative entropy of coherence is a coherence  monotone~\cite{Streltsov2}. As we know, the condition C2b is important as it allows for sub-selection based on measurement outcomes, a process available in well controlled quantum experiments and it's also difficult to verify~\cite{Plenio1}. A natural question arises immediately, is the condition C2b satisfied for the R\'{e}nyi $\alpha-$relative entropy of coherence?

In this paper, we will resolve the above question. In Sect. 2, we review basic points for the R\'{e}nyi $\alpha-$relative entropy of coherence
. In Sect. 3, We prove that the R\'{e}nyi $\alpha-$relative entropy of coherence does't fulfill the condition C2b and it also does not fulfill the extension condition C2b presented in~\cite{Rastegin1}. We give the tradeoff relation between the R\'{e}nyi $\alpha-$relative entropy of coherence and mixedness in Sect. 4. The case of the R\'{e}nyi $\alpha-$relative entropy of coherence for a single qubit is discussed in Sect 5. In Sect. 6 we give the summary of results.

\section{The R\'{e}nyi $\alpha-$relative entropy of coherence}\label{II}
In this section, we recall basic points of the R\'{e}nyi $\alpha-$relative entropy of coherence present in~\cite{Chitambar4}. For $\alpha\in[0,\infty]$ the R\'{e}nyi $\alpha$- relative entropy of the states $\rho$ by $\delta$ is defined by~\cite{Hiai1}
\begin{eqnarray}
S_{\alpha}(\rho\|\delta):=\frac{1}{\alpha-1}\log\tr(\rho^\alpha\delta^{1-\alpha})\;.
\end{eqnarray}
This quantity is contractive for all $\alpha\in[0,2]$. Since the R\'{e}nyi $\alpha-$relative entropy can act as an information distance measure~\cite{Hiai1}. Then, we can define the R\'{e}nyi $\alpha-$relative entropy of coherence as:
\begin{eqnarray}
C_\alpha(\rho):=\min_{\delta\in\mI} S_{\alpha}(\rho\|\delta).
\end{eqnarray}
Note that in the limit $\alpha\to 1$, $S_{\alpha\rightarrow 1}(\rho\| \delta)$ gives the relative entropy $S(\rho\|\delta )=\tr(\rho\log\rho)-\tr(\rho\log\delta)$. Let $\delta=\sum_i q_i |i\lr i|$ be some incoherent states, then the analytical expression of $C_\alpha(\rho)$ can be obtained as~\cite{Chitambar4}
\begin{eqnarray}
C_\alpha(\rho):=\min_{\{q_i\}} \frac{1}{\alpha-1}\log\sum_{i} q_{i}^{1-\alpha}\langle i|\rho^\alpha|i\rangle.
\end{eqnarray}
Eq.(3) can be further simplified as~\cite{Chitambar4}
\begin{eqnarray}
C_{\alpha}(\rho)=\frac{\alpha}{\alpha-1}\log \sum_{i}\left(\langle i|\rho^\alpha|i\rangle\right)^{1/\alpha}.
\end{eqnarray}
In this paper, we don't consider the cases for $\alpha=0$ and the limit $\alpha\to 1$. For $\alpha=0$, the R\'{e}nyi relative entropy of coherence is always equal to $0$. For the limit $\alpha\to 1$, a detailed study for the standard relative entropy of coherence is presented in~\cite{Plenio1}

\section{The monotoncity property}\label{III}
First we show that $C_\alpha (\rho)$ fulfills the condition C3 for $\alpha\in [0,1)$. In Ref.~\cite{Hiai1}, it is shown that $S_\alpha(\rho || \delta)$ is convexity for $\alpha\in [0,1)$. For any ensemble$\{p_i,\rho_i\}$, we assume the incoherent states $\delta_i^*$ are closet with respect to $\rho_i$, then we have
\begin{eqnarray}
C_\alpha(\sum_i p_i \rho_i)&=&\min_{\delta\in \mI} S_\alpha(\sum_i p_i \rho_i ||\delta ) \nonumber \\
&\leq & S_\alpha(\sum_i p_i \rho_i ||\sum_i p_i \delta_i^* ) \nonumber \\
&\leq & \sum_i p_i S_\alpha( \rho_i || \delta_i^* ) =\sum_i p_i C_\alpha( \rho_i),
\end{eqnarray}
where the second inequality using the convexity of $S_\alpha(\rho || \delta)$. We conclude that for $\alpha\in [0,1)$, $C_\alpha (\rho)$ cannot increase under mixing of quantum states, then $C_\alpha (\rho)$ fulfills the condition C3 for $\alpha\in [0,1)$. $S_\alpha(\rho || \delta)$ isn't convexity anymore for $\alpha\in (1,2]$~\cite{Hiai1}, there may exist some cases which lead to $C_\alpha(\sum_i p_i \rho_i) > \sum_i p_i C_\alpha( \rho_i)$.
 The condition C3 combined with C2b, implies C2a. In Ref.~\cite{Chitambar4}, Chitambar \emph{et al.} study the R\'{e}nyi $\alpha-$relative entropy of coherence fulfills C2a for different kinds of incoherent operations. They don't consider whether the R\'{e}nyi $\alpha-$relative entropy of coherence fulfills the condition C2b. This motivates us to study whether C2b is satisfied for $C_\alpha (\rho)$.

Now we use the example which presented in~\cite{Plenio1} to show that condition C2b is violated. We choose $|0\rangle=\begin{pmatrix} 1  \\ 0  \\ 0 \end{pmatrix}$,$|1\rangle=\begin{pmatrix} 0  \\ 1  \\ 0 \end{pmatrix}$ and $|2\rangle=\begin{pmatrix} 0  \\ 0  \\ 1 \end{pmatrix}$ are the prescribed
orthonormal basis. The two Kraus operators are written as
\begin{equation}
K_{1}=
\begin{pmatrix}
0 & 1 & 0 \\
0 & 0 & 0 \\
0 & 0 & a
\end{pmatrix}
, \qquad
K_{2}=
\begin{pmatrix}
1 & 0 & 0 \\
0 & 0 & b \\
0 & 0 & 0
\end{pmatrix},
\end{equation}
where the complex numbers $a$ and $b$ obey $|a|^{2}+|b|^{2}=1$. This condition guarantees that  $\sum_n K_n^{\dagger}K_n= \mathbb{I}$.
The density matrix is presented as
\begin{equation}
\rho=
\frac{1}{4}
\begin{pmatrix}
1 & 0 & 1 \\
0 & 2 & 0 \\
1 & 0 & 1
\end{pmatrix}
.
\end{equation}
After applying this channel to the density matrix $\rho$, we obtain the output states:
\begin{align}
&\rho_{1}=
\frac{1}{2+|a|^{2}}
\begin{pmatrix}
2 & 0 & 0 \\
0 & 0 & 0 \\
0 & 0 & |a|^{2}
\end{pmatrix}
, \label{php120}\\
&\rho_{2}=
\frac{1}{1+|b|^{2}}
\begin{pmatrix}
1 & b^{*} & 0 \\
b & |b|^{2} & 0 \\
0 & 0 & 0
\end{pmatrix}
.
\end{align}
With the probabilities:
\begin{equation}
p_{1}=\frac{2+|a|^{2}}{4}
\ , \qquad
p_{2}=\frac{1+|b|^{2}}{4}
\ . \label{prb12}
\end{equation}
By using Eq(4), we obtain the R\'{e}nyi $\alpha-$relative entropy of coherence for $\rho$ as
\begin{equation}
C_\alpha(\rho)=\frac{\alpha}{\alpha-1}\log [\frac{1}{2}+(\frac{1}{2})^{\frac{1}{\alpha}}].
\end{equation}
Note that the operator $K_1$ makes $C_\alpha(\rho_1) = 0$, we only need to calculate $C_\alpha(\rho_2)$.
\begin{equation}
C_\alpha(\rho_2)=\frac{\alpha}{\alpha-1}\log [(\frac{1}{1+|b|^2})^\frac{1}{\alpha}+(\frac{|b|^2}{1+|b|^2})^\frac{1}{\alpha}].
\end{equation}
We choose $b=1$, substituting it into Eqs.(9) and (10), we then get
\begin{equation}
p_2C_\alpha(\rho_2)=\frac{1}{2}\frac{\alpha}{\alpha-1}\log [(\frac{1}{2})^\frac{1}{\alpha}+(\frac{1}{2})^\frac{1}{\alpha}].
\end{equation}
Using the inequality $x+y\geq 2\sqrt{xy}$, we obtain
\begin{equation}
\frac{1}{2}+(\frac{1}{2})^\frac{1}{\alpha}\geq 2\sqrt{\frac{1}{2}\times(\frac{1}{2})^\frac{1}{\alpha}}=\sqrt{2\times(\frac{1}{2})^\frac{1}{\alpha}}.
\end{equation}
The equality holding if and only if $\alpha=1$, Thus, for $\alpha\in (0,1)$
\begin{equation}
C_\alpha(\rho)=\frac{\alpha}{\alpha-1}\log [\frac{1}{2}+(\frac{1}{2})^{\frac{1}{\alpha}}] < \frac{1}{2}\frac{\alpha}{\alpha-1}\log [(\frac{1}{2})^\frac{1}{\alpha}+(\frac{1}{2})^\frac{1}{\alpha}]=p_2C_\alpha(\rho_2).
\end{equation}
If we choose $b=\frac{1}{2}$,  substituting it into Eqs.(9) and (10), we get
\begin{equation}
p_2C_\alpha(\rho_2)=\frac{3}{8}\frac{\alpha}{\alpha-1}\log [(\frac{2}{3})^\frac{1}{\alpha}+(\frac{1}{3})^\frac{1}{\alpha}].
\end{equation} 
\begin{figure}\label{fig1}
  \includegraphics[scale=0.7]{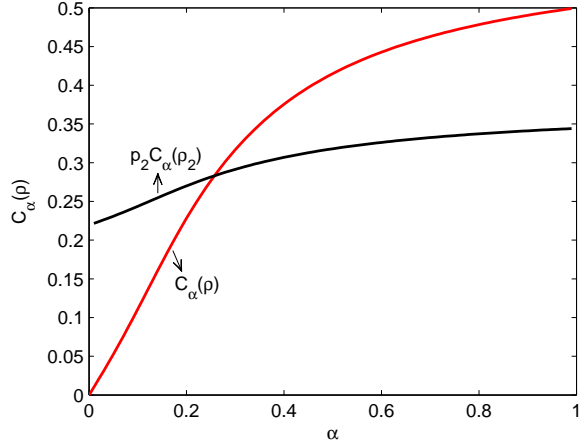}
\caption{  Comparison between $C_{\alpha}(\rho)$ and $p_2C_{\alpha}(\rho_2)$ for $b=\frac{1}{2}$. The black line shows $p_2C_{\alpha}(\rho_2)$. The red line shows  $C_{\alpha}(\rho)$ .}
\label{Fig_1}
\end{figure}
We plot $C_{\alpha}(\rho)$ and $p_2C_\alpha(\rho_2)$ in Fig.1. From the Fig.1, we can also find some cases to illustrate $C_\alpha(\rho)<\sum_n p_n C_\alpha(\rho_n)$.

Recently, Yu $et.al.$ propose an alternative framework for quantifying coherence which is more flexible and convenient for applications than the original one~\cite{Yu060302}. Their framework can be expressed as follows. Any proper measure of the coherence $C$ must satisfy the following three conditions:

(B1) Nonnegativity:  $C(\rho)\geq 0$ for all quantum states $\rho$, and $C(\rho)=0$ if and only if $\rho\in \mathcal{I}$.

(B2) Monotonicity:  $C(\rho)\geq C(\Phi(\rho)) $, where $\Phi(\rho)=\sum_n K_n \rho K_n^\dag$ is an incoherent operation.

(B3) Additivity of coherence for subspace-independent states: $C(p_1 \rho_1\oplus p_2 \rho_2)=p_1C(\rho_1)+p_2C(\rho_2)$ for block-diagonal states $\rho$ in the incoherent basis, where density operators $\rho_1$ and $\rho_2$ are defined on the two independent subspaces, $p_1$ and $p_2$ are two possibility coefficients with $p_1+p_2=1$ and $p_1 \rho_1\oplus p_2 \rho_2=\begin{pmatrix}p_1\rho_1 & 0  \\0 & p_2 \rho_2  \end{pmatrix} $.

The above three conditions(B1,B2,B3) are fulfilled by all the coherence measures based on the original four conditions(C1,C2a, C2b C3). Thus, this framework provides us an alternative method to illustrate that the measure of coherence induced by R\'{e}nyi $\alpha$-relative entropy must violate C2b. We consider a state $\rho=p_1 \rho_1\oplus p_2 \rho_2$, with $\rho_1=\frac{1}{2}(|0\rangle+|1\rangle)(\langle0|+\langle1|)$ and $\rho_2=\frac{1}{3}(|2\rangle+|3\rangle+|4\rangle)(\langle2|+\langle3|+\langle4|)$~\cite{Yu060302}. We choose the computational basis ${|i\rangle}_{i=0}^4$ as the reference basis, then we have
\begin{eqnarray}
C_{\alpha}(\rho_1)&=&1, \nonumber \\
C_{\alpha}(\rho_2)&=&\log 3, \nonumber \\
C_{\alpha}(\rho)&=&\frac{\alpha}{\alpha-1}\log [(\frac{1}{2})^{1/{\alpha}}+\frac{3}{2}(\frac{1}{3})^{1/{\alpha}}].
\end{eqnarray}
\begin{figure}\label{fig1}
  \includegraphics[scale=0.7]{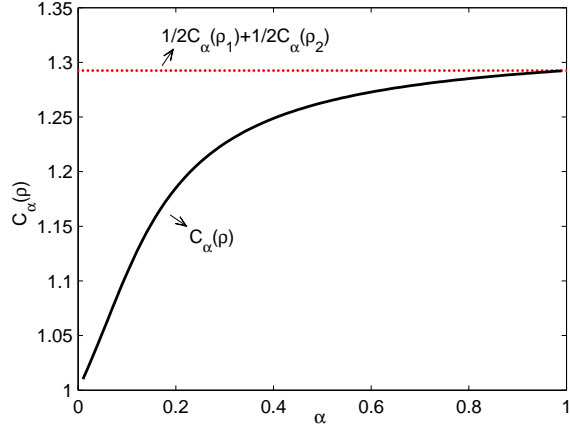}
\caption{Comparison between $C_{\alpha}(\rho)$ and $\frac{1}{2}C(\rho_1)+\frac{1}{2}C(\rho_2)$. The black solid line shows $p_2C_{\alpha}(\rho_2)$. The red dotted line shows  $\frac{1}{2}C(\rho_1)+\frac{1}{2}C(\rho_2)$ .}
\label{Fig_1}
\end{figure}
 We plot $C_{\alpha}(\rho)$ and $\frac{1}{2}C(\rho_1)+\frac{1}{2}C(\rho_2)$ in Fig.2. It is shown that 
\begin{eqnarray}
C_{\alpha}(\rho)&=&C_{\alpha}(p_1 \rho_1\oplus p_2 \rho_2)\neq p_1C_{\alpha}(\rho_1)+p_2C_{\alpha}(\rho_2).
\end{eqnarray} 
We note that for the limit $\alpha\to 1$, the R\'{e}nyi $\alpha-$relative entropy of coherence will become the standard relative entropy of coherence, thus we have $C_{\alpha\to 1}(\rho)= p_1C_{\alpha\to 1}(\rho_1)+p_2C_{\alpha\to 1}(\rho_2)$.
Therefore, the R\'{e}nyi $\alpha-$relative entropy of coherence must violate C2b in general.

From the above examples, we then conclude that condition C2b, i.e., $C_\alpha(\rho)\geq \sum_n p_n C_\alpha(\rho_n)$ is not generally true for the
measure of coherence induced by R\'{e}nyi $\alpha$-relative entropy.

In Ref.~\cite{Rastegin1}, Rastegin studies the Tsallis relative $\alpha$ entropies of coherence $C_\alpha^T(\rho)=\min \frac{1}{\alpha-1}[\tr(\rho^\alpha\sigma^{1-\alpha})-1]$ and give an extension of condition C2b. The extension of condition C2b can be represented as
\begin{equation}
\sum_n p_n^\alpha q_n^{1-\alpha} C_\alpha^T(\rho)\leq C_\alpha^T(\rho).
\end{equation}
where $p_n=\tr (K_n \rho K_n^\dag)$,$q_n=\tr (K_n \sigma K_n^\dag)$.

We compare the Tsallis relative $\alpha$ entropies of coherence with the R\'{e}nyi $\alpha$-relative entropy of coherence $C_\alpha(\rho)=\min \frac{1}{\alpha-1}\log\tr(\rho^\alpha\sigma^{1-\alpha})$,  $C_\alpha^T(\rho)$ and $C_\alpha(\rho)$ are different with the $\log$ function. We recall that the fidelity of coherence $C_F(\rho)=1-\max_{\delta\in \mI}\sqrt{F(\rho,\delta)}$, fulfills conditions C1, C2a and C3, it violates condition C2b~\cite{Shao1}. Another related quantity is geometric coherence $C_g(\rho)=1-\max_{\delta\in \mI} F(\rho,\delta)$, fulfills conditions C1, C2a C2b and C3~\cite{Streltsov1}. Although $C_F(\rho)$ and $C_g(\rho)$ are different with the square root function, but $C_g(\rho)$ fulfills the conditions C2b and $C_F(\rho)$ doesn't fulfill condition C2b. Another question arises immediately, is the extension of condition C2b satisfied for the R\'{e}nyi $\alpha-$relative entropy of coherence $C_\alpha(\rho)$?

We also use the above example to solve this problem. According to Ref.~\cite{Chitambar4}, when we use the R\'{e}nyi $\alpha$-relative entropy to quantify coherence, the optimal incoherent state for $\rho$ is
\begin{equation}
\delta =
\frac{1}{2+\sqrt{2}}
\begin{pmatrix}
1 & 0 & 0 \\
0 & \sqrt{2} & 0 \\
0 & 0 & 1
\end{pmatrix}
.
\end{equation}
With the corresponding probabilities:
\begin{eqnarray}
&&q_{1}=\tr(K_{1}\delta K_{1}^{\dagger})=\frac{\sqrt{2}+|a|^{2}}{2+\sqrt{2}} ,  \nonumber \\
&&q_{2}=\tr(K_{2}\delta K_{2}^{\dagger})=\frac{1+|b|^{2}}{2+\sqrt{2}} .
\end{eqnarray}
We also choose $b=1$, substituting it into Eqs.(10),(12) and (21), we can obtain the expression of $p_2^{\alpha}q_2^{1-\alpha}C_\alpha(\rho_2)$
\begin{eqnarray}
p_2^{\alpha}q_2^{1-\alpha}C_\alpha(\rho_2)&=&
(\frac{1+|b|^2}{4})^\alpha (\frac{1+|b|^2}{2+\sqrt{2}})^{1-\alpha}\frac{\alpha}{\alpha-1}
\log [(\frac{1}{1+|b|^2})^\frac{1}{\alpha}+(\frac{|b|^2}{1+|b|^2})^\frac{1}{\alpha}] \nonumber \\
&=& (\frac{1}{2})^\alpha (\frac{2}{2+\sqrt{2}})^{1-\alpha}\frac{\alpha}{\alpha-1}
\log [(\frac{1}{2})^\frac{1}{\alpha}+(\frac{1}{2})^\frac{1}{\alpha}] .
\end{eqnarray}
Compare Eq.(13) with Eq.(22), after some simple algebraic operation, we obtain
\begin{eqnarray}
(\frac{1}{2})^\alpha\times(\frac{2}{2+\sqrt{2}})^{1-\alpha} \geq \frac{1}{2}.
\end{eqnarray}
The equality holding if and only if $\alpha=1$. Thus, for $\alpha\in (0,1)$
\begin{eqnarray}
C_\alpha(\rho) < p_2C_\alpha(\rho_2)< p_2^{\alpha}q_2^{1-\alpha}C_\alpha(\rho_2).
\end{eqnarray}
From the above example, we then conclude that the extension of condition C2b, i.e., $\sum_n p_n^\alpha q_n^{1-\alpha} C_\alpha(\rho)\leq C_\alpha(\rho)$ is not generally true for the
measure of coherence induced by R\'{e}nyi $\alpha$-relative entropy.

\section{The R\'{e}nyi $\alpha-$relative entropy of coherence and mixedness }\label{IIII}
In order to be a meaningful resource quantum quantifier for coherence, the minimal requirements are the conditions C1 and C2a for any quantity $C$~\cite{Streltsov2}. We have proved that the R\'{e}nyi $\alpha-$relative entropy of coherence does not fulfill the condition C2b and the extension of condition C2b, but the R\'{e}nyi $\alpha-$relative entropy of coherence is satisfied
the minimal requirements to be a coherence quantifier for $\alpha\in [0,2]$, thus the R\'{e}nyi $\alpha-$relative entropy of coherence can act as a coherence monotone quantifier~\cite{Streltsov2}.  An important problem for quantifying coherence is the relationship between quantum coherence quantities and mixedness. The trade-off between some quantities and mixedness have been discussed in~\cite{Singh1,Cheng1}. Here we focus on the trade-off between the R\'{e}nyi $\alpha-$relative entropy of coherence and mixedness.
\begin{theorem}
For $0<\alpha\leq 2 $ and $\alpha\neq1$ , the upper bound of the R\'{e}nyi $\alpha-$relative entropy of coherence is given by
\begin{equation}
C_\alpha(\rho)\leq \log d+\log \tr(\rho^2),
\end{equation}
and the trade-off between $C_\alpha(\rho)$ and mixedness can express as
\begin{equation}
\frac{\ln2}{d-1}C_\alpha(\rho)+M(\rho)\leq 1,
\end{equation}
where $M(\rho):=\frac{d}{d-1}[1-\tr(\rho^2)]$
\end{theorem}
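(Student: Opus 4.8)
The plan is to prove the upper bound first and then obtain the trade-off from it by a single elementary inequality. Throughout I use the closed form of Eq.~(4), namely $C_\alpha(\rho)=\frac{\alpha}{\alpha-1}\log\sum_i\big(\langle i|\rho^\alpha|i\rangle\big)^{1/\alpha}$, and I read $\log$ as $\log_2$, as is forced by the factor $\ln 2$ appearing in the trade-off relation.

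For the upper bound I would first reduce the entire range $\alpha\in(0,2]\setminus\{1\}$ to the endpoint $\alpha=2$. The point is that $C_\alpha(\rho)$ is non-decreasing in $\alpha$: since $S_\alpha(\rho\|\delta)$ is non-decreasing in $\alpha$ for each fixed $\delta$ (a standard property of the R\'{e}nyi relative entropy~\cite{Hiai1}), denoting by $\delta_\alpha^\ast\in\mI$ the incoherent minimizer at parameter $\alpha$ one gets, for $\alpha\le 2$, $C_\alpha(\rho)=S_\alpha(\rho\|\delta_\alpha^\ast)\le S_\alpha(\rho\|\delta_2^\ast)\le S_2(\rho\|\delta_2^\ast)=C_2(\rho)$. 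It therefore suffices to bound $C_2(\rho)=2\log\sum_i\sqrt{\langle i|\rho^2|i\rangle}$. Writing $c_i:=\langle i|\rho^2|i\rangle\ge 0$, so that $\sum_i c_i=\tr(\rho^2)$, the Cauchy--Schwarz inequality gives $\big(\sum_i\sqrt{c_i}\big)^2\le d\sum_i c_i=d\,\tr(\rho^2)$, whence $C_2(\rho)\le\log\big(d\,\tr(\rho^2)\big)=\log d+\log\tr(\rho^2)$. This is the claimed upper bound.

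With the bound available, the trade-off is purely algebraic. Using $\ln 2\cdot\log_2 x=\ln x$ together with the upper bound, the left-hand side is at most $\frac{1}{d-1}\ln\big(d\,\tr(\rho^2)\big)+\frac{d}{d-1}\big[1-\tr(\rho^2)\big]$. Setting $u:=d\,\tr(\rho^2)$, which lies in $[1,d]$, and multiplying through by $d-1>0$, the desired inequality collapses to $\ln u-u+1\le 0$, i.e. the elementary fact $\ln u\le u-1$. This closes the proof, with equality only at $u=1$, that is for the maximally mixed state $\tr(\rho^2)=1/d$.

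The one genuinely non-routine ingredient, and hence the main obstacle, is the uniformity in $\alpha$: the prefactor $\frac{\alpha}{\alpha-1}$ changes sign at $\alpha=1$, so attempting to estimate $\sum_i\big(\langle i|\rho^\alpha|i\rangle\big)^{1/\alpha}$ directly would force a separate treatment of $\alpha<1$ and $\alpha>1$ and would tangle the $\alpha$-dependence of $\rho^\alpha$ with that of the exponent. Invoking the monotonicity of $C_\alpha$ in $\alpha$ dissolves this difficulty, reducing every admissible $\alpha$ to the single case $\alpha=2$, where the estimate is one application of Cauchy--Schwarz; establishing (or citing) that monotonicity is the step I would be most careful about.
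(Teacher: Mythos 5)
Your proof is correct, but it takes a genuinely different route from the paper's. The paper never reduces to $\alpha=2$: it bounds $C_\alpha(\rho)$ for each $\alpha$ separately by plugging in the maximally mixed state $\delta=\sum_i\frac{1}{d}\op{i}{i}$ as a (generally suboptimal) incoherent reference, obtaining $C_\alpha(\rho)\le \log d+\frac{1}{\alpha-1}\log\tr\rho^\alpha$, and then shows $\frac{1}{\alpha-1}\log\tr\rho^\alpha\le\log\tr\rho^2$ via Jensen's inequality applied to the concave function $\varepsilon\mapsto\frac{\varepsilon^{\alpha-1}}{\alpha-1}$ on the eigenvalues of $\rho$, with a sign case analysis around $\alpha=1$. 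You instead invoke monotonicity of $S_\alpha(\rho\|\delta)$ in $\alpha$ to collapse the whole range to $C_\alpha(\rho)\le C_2(\rho)$ and finish with one Cauchy--Schwarz estimate on $\sum_i\sqrt{\langle i|\rho^2|i\rangle}$; that monotonicity is indeed a standard property of the Petz R\'enyi divergence (it follows from convexity of $\alpha\mapsto\log\tr(\rho^\alpha\delta^{1-\alpha})$ together with its vanishing at $\alpha=1$), so the step you flag as delicate is sound, though you are right to insist it be cited or proved. Your route is cleaner and avoids the sign bookkeeping entirely; the paper's route is more self-contained (no appeal to monotonicity in $\alpha$) and yields the intermediate $\alpha$-dependent bound $\log d+\frac{1}{\alpha-1}\log\tr\rho^\alpha$, which is slightly sharper for $\alpha<2$ before being relaxed to the purity. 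The passage from the upper bound to the trade-off via $\ln u\le u-1$ with $u=d\,\tr(\rho^2)$ is identical in both arguments.
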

\begin{proof}
We only prove the case of $\alpha\in(0,1)$, $\alpha\in(1,2]$ is completely analogous. For $\alpha\in(0,2]$ and $\alpha\neq1$, we choose $\delta$ is the completely mixed state $\delta=\sum_i\frac{1}{d}|i \rangle \langle i|$, then we can obtain
\begin{eqnarray}
C_\alpha(\rho)&=&\min_{\delta\in I} S_\alpha(\rho||\delta)  \nonumber \\
&\leq&S_\alpha(\rho||\sum_i\frac{1}{d}|i\rangle\langle i|) \nonumber \\
&=&\frac{1}{\alpha-1}\log \tr[\rho^\alpha(\sum_i\frac{1}{d}|i \rangle \langle i|)^{1-\alpha}] \nonumber \\
&=& \frac{1}{\alpha-1}\log \tr (d^{\alpha-1} \rho^\alpha) \nonumber \\
&=& \frac{1}{\alpha-1}(\log d^{\alpha-1}+\log \tr \rho^\alpha).
\end{eqnarray}
The inequation holding is that the completely mixed state $\sum_i\frac{1}{d}|i\rangle\langle i|$ may not be the optimal incoherent state for $\rho$.
According to~\cite{Rastegin1}, for $\alpha\in(0,2]$ and $\alpha\neq1$, the function $\varepsilon\rightarrow \frac{\varepsilon^{\alpha-1}}{\alpha-1}$ is concave, applying Jensen's inequality, we then have
 \begin{eqnarray}
\frac{\tr(\rho^\alpha)}{\alpha-1}=\sum_i \lambda_i\frac{\lambda_i^{\alpha-1}}{\alpha-1}\leq \frac{[\tr(\rho^2)]^{\alpha-1}}{\alpha-1}.
\end{eqnarray}
Here $\lambda_i$ are the  eigenvalues of $\rho$ and obey the normalization condition. For $\alpha\in(0,1)$, $\alpha-1<0$, then $ \tr(\rho^\alpha)\geq [\tr(\rho^2)]^{\alpha-1}$, so $\frac{\log \tr(\rho^\alpha)}{\alpha-1}\leq \frac{\log [\tr(\rho^2)]^{\alpha-1}}{\alpha-1}$, then
\begin{eqnarray}
C_\alpha(\rho)&\leq& \frac{1}{\alpha-1}(\log d^{\alpha-1}+\log \tr \rho^\alpha) \nonumber \\
&\leq& \frac{\log d^{\alpha-1}+\log [\tr(\rho^2)]^{\alpha-1}}{\alpha-1} \nonumber \\
&=& \log d +\log \tr(\rho^2).
\end{eqnarray}
Using $\ln x\leq x-1$, we then have
\begin{eqnarray}
C_\alpha(\rho)&\leq& \log d +\log \tr(\rho^2) \nonumber \\
&=&\log d \tr(\rho^2) \nonumber \\
&\leq&\frac{d \tr\rho^2-1}{\ln 2}.
\end{eqnarray}
Combine $M(\rho):=\frac{d}{d-1}[1-\tr(\rho^2)]$ and the above inequation, we get
\begin{eqnarray}
\frac{\ln2}{d-1}C_\alpha(\rho)+M(\rho)\leq1.
\end{eqnarray}
\end{proof}
Eq.(25) provide an upper bound on R\'{e}nyi $\alpha-$relative entropy of coherence in terms of the purity $\tr(\rho^2)$. Eq.(26) showed that when  mixedness increases, an upper bound on R\'{e}nyi $\alpha-$relative entropy of coherence decreases.

\section{the R\'{e}nyi $\alpha-$relative entropy of coherence for a single qubit }\label{IIIII}
Due to the analytical expression of $C_\alpha(\rho)$ for qubit is complicated, so we consider a simple case that we choose $\alpha=2$ for the coherence quantity. The qubit states can write as
\begin{equation}
\rho =
\begin{pmatrix}
 a & b^*  \\
b & 1-a
\end{pmatrix}
.
\end{equation}
The eigenvalues of $\rho$ are expressed as
\begin{eqnarray}
&&\lambda_1=\frac{1}{2}+\frac{1}{2}\sqrt{1+4|b|^2+4a^2-4a}  \nonumber \\
&&\lambda_2=\frac{1}{2}-\frac{1}{2}\sqrt{1+4|b|^2+4a^2-4a}  .
\end{eqnarray}
Combine Eq(4) and Eq(32), we can obtain,
\begin{equation}
C_2(\rho)=2\log(\sqrt{a^2+|b|^2}+\sqrt{|b|^2+(1-a)^2}).
\end{equation}
Due to $0\leq \lambda_{1,2}\leq 1$, thus we have
\begin{equation}
|b|^2\leq a(1-a).
\end{equation}
For the given $a$, the minimum of $C_2(\rho)$ is zero. The maximum of $C_2(\rho)$ can be expressed as
\begin{equation}
C_2^{max}(\rho)=2\log(\sqrt{a}+\sqrt{1-a}).
\end{equation}
Now, we consider the precise trade-off between $C_2(\rho)$ and mixedness for qubit case.

\begin{theorem}
For the given $a$, the trade-off between $C_2(\rho)$ and mixedness $M(\rho)$ can express as
\begin{equation}
\ln2C_2(\rho)+M(\rho)< 2\sqrt{a(1-a)}\leq1.
\end{equation}
where $M(\rho):=\frac{d}{d-1}[1-\tr(\rho^2)]$
\end{theorem}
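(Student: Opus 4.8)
The plan is to reduce the claimed bound to an elementary algebraic inequality in the two real parameters $a$ and $t:=|b|^2$, subject to the purity constraint $|b|^2\leq a(1-a)$. First I would make both sides fully explicit. With $d=2$ and $\tr(\rho^2)=a^2+(1-a)^2+2|b|^2$, the mixedness is
\begin{equation}
M(\rho)=\frac{2}{2-1}\big[1-\tr(\rho^2)\big]=4\big[a(1-a)-|b|^2\big].
\end{equation}
For the coherence term I set $x=\sqrt{a^2+|b|^2}$ and $y=\sqrt{(1-a)^2+|b|^2}$, so that the closed form for $C_2(\rho)$ reads $C_2(\rho)=2\log(x+y)$ and hence
\begin{equation}
\ln 2\,C_2(\rho)=2\ln(x+y)=\ln\big((x+y)^2\big).
\end{equation}
Writing the coherence contribution as the logarithm of a single square is exactly what makes the elementary bound $\ln u\leq u-1$ usable.

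Next I would linearize the logarithm. Applying $\ln u\leq u-1$ with $u=(x+y)^2$ and using
\begin{equation}
(x+y)^2=a^2+(1-a)^2+2|b|^2+2\sqrt{(a^2+|b|^2)\big((1-a)^2+|b|^2\big)},
\end{equation}
I would add $M(\rho)$; since $a^2+(1-a)^2=1-2a(1-a)$, the constant and the $|b|^2$ terms combine to give
\begin{equation}
\ln 2\,C_2(\rho)+M(\rho)\leq 2a(1-a)-2|b|^2+2\sqrt{(a^2+|b|^2)\big((1-a)^2+|b|^2\big)}.
\end{equation}
It therefore suffices to establish the single root inequality
\begin{equation}
a(1-a)-|b|^2+\sqrt{(a^2+|b|^2)\big((1-a)^2+|b|^2\big)}\leq \sqrt{a(1-a)}.
\end{equation}

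Finally I would dispatch this by isolating the square root and squaring. Since $a(1-a)\leq\tfrac14$ forces $\sqrt{a(1-a)}\geq a(1-a)$, the side $\sqrt{a(1-a)}-a(1-a)+|b|^2$ of the rearranged inequality is nonnegative, so squaring is legitimate. Expanding $(a^2+|b|^2)\big((1-a)^2+|b|^2\big)=a^2(1-a)^2+|b|^2\big(1-2a(1-a)\big)+|b|^4$, the difference of the two squared sides collapses to $\big(|b|^2-a(1-a)\big)\big(1-2\sqrt{a(1-a)}\big)$, which is $\leq0$ because $|b|^2\leq a(1-a)$ and $2\sqrt{a(1-a)}\leq1$. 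The inequality $2\sqrt{a(1-a)}\leq1$ is just AM--GM and simultaneously supplies the rightmost estimate in the statement.

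The step I expect to require the most care is strictness. It enters only through $\ln u<u-1$, which is an equality solely at $u=1$, i.e. at $|b|=0$. For $|b|\neq0$ the logarithmic step is already strict; for $|b|=0$ the left side equals $4a(1-a)$, which lies strictly below $2\sqrt{a(1-a)}$ whenever $a\neq\tfrac12$. The one genuine equality case is the maximally mixed state $a=\tfrac12,\ b=0$, where $C_2(\rho)=0$ and $M(\rho)=1=2\sqrt{a(1-a)}$; apart from this isolated point the strict inequality of the statement is maintained.
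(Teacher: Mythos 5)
Your derivation of the non-strict bound is correct, and it departs from the paper's proof at the last step. Both arguments begin identically: write $\ln 2\,C_2(\rho)=\ln\big((x+y)^2\big)$ with $x=\sqrt{a^2+|b|^2}$, $y=\sqrt{(1-a)^2+|b|^2}$, and linearize with $\ln u\leq u-1$. The paper then combines $(x+y)^2-1+M(\rho)$ into the closed form $1-(x-y)^2$, shows by differentiation that this is nondecreasing in $|b|^2$, and evaluates at the extreme point $|b|^2=a(1-a)$ to obtain $1-(\sqrt{a}-\sqrt{1-a})^2=2\sqrt{a(1-a)}$. You instead keep the cross term $2xy$ explicit and reduce the claim to the single radical inequality $a(1-a)-|b|^2+xy\leq\sqrt{a(1-a)}$, which you settle by isolating the root and squaring; your factorization of the difference of squares as $\big(|b|^2-a(1-a)\big)\big(1-2\sqrt{a(1-a)}\big)\leq 0$ is correct. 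Your route is purely algebraic and avoids the calculus/monotonicity step; the paper's route exposes the structure $1-(x-y)^2$, which is perhaps more illuminating but requires checking the sign of a derivative. Either way the bound $\leq 2\sqrt{a(1-a)}\leq 1$ follows.

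On strictness you are more careful than the paper, but your case analysis is still incomplete. For $b=0$ you assert that $4a(1-a)<2\sqrt{a(1-a)}$ whenever $a\neq\tfrac12$; this fails at $a=0$ and $a=1$, where both sides vanish and equality holds (these are the pure incoherent states, for which $C_2=M=0=2\sqrt{a(1-a)}$). Together with the maximally mixed state $a=\tfrac12$, $b=0$ that you correctly flag, there are thus three admissible states at which the strict inequality of the theorem is violated, so the statement as written needs either a non-strict $\leq$ or an exclusion of these degenerate points. (The paper's own equality analysis is worse: it derives the equality locus as ``$a=0,b=1$ or $a=1,b=0$'' --- the first pair does not even satisfy $|b|^2\leq a(1-a)$ and should read $a=0,b=0$ --- and then wrongly dismisses both as non-states, while also missing $a=\tfrac12$, $b=0$ because equality in its monotonicity step is automatic when $x=y$.) Your observation that equality genuinely occurs is therefore not a flaw in your argument but a correction to the theorem; just extend it to cover $a\in\{0,1\}$.
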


\begin{proof}
Combine Eq.(31) and Eq.(34), we can obtain
\begin{eqnarray}
\ln 2 C_2(\rho)+M(\rho)&=&\ln2 \log(\sqrt{a^2+|b|^2}+\sqrt{|b|^2+(1-a)^2})^2+M(\rho)  \nonumber \\
&=&\ln(\sqrt{a^2+|b|^2}+\sqrt{|b|^2+(1-a)^2})^2+M(\rho) \nonumber \\
&\leq&(\sqrt{a^2+|b|^2}+\sqrt{|b|^2+(1-a)^2})^2-1+2[1-(a^2+|b|^2)-(|b|^2+(1-a)^2)]      \nonumber \\
&=& 1-(\sqrt{a^2+|b|^2}-\sqrt{|b|^2+(1-a)^2})^2,
\end{eqnarray}
where the inequation uses $\ln x\leq x-1$. Next, we illustrate that $1-(\sqrt{a^2+|b|^2}-\sqrt{|b|^2+(1-a)^2})^2$ is a monotone increasing function of $|b|^2$ for the given $a$ and $|b|^2\leq a(1-a)$. We take derivative with respect to $|b|^2$, then we have
 \begin{eqnarray}
 \frac{d (1-(\sqrt{a^2+|b|^2}-\sqrt{|b|^2+(1-a)^2})^2)}{d (|b|^2)}&=&-2(\sqrt{a^2+|b|^2}-\sqrt{|b|^2+(1-a)^2})(\frac{1}{\sqrt{a^2+|b|^2}}-\frac{1}{\sqrt{|b|^2+(1-a)^2}}) \nonumber \\
 &=&\frac{2(\sqrt{a^2+|b|^2}-\sqrt{|b|^2+(1-a)^2})^2}{\sqrt{a^2+|b|^2}\sqrt{|b|^2+(1-a)^2}}\geq0.
\end{eqnarray}
 The maximum value of $1-(\sqrt{a^2+|b|^2}-\sqrt{|b|^2+(1-a)^2})^2$ is obtained when $|b|^2= a(1-a)$. Thus,
\begin{eqnarray}
\ln 2 C_2(\rho)+M(\rho)&\leq& 1-(\sqrt{a^2+|b|^2}-\sqrt{|b|^2+(1-a)^2})^2  \nonumber \\
&\leq& 1-(\sqrt{a}-\sqrt{1-a})^2 \nonumber \\
&=&2\sqrt{a(1-a)}.
\end{eqnarray}
The equality holding if and only if $\ln(\sqrt{a^2+|b|^2}+\sqrt{|b|^2+(1-a)^2})^2=(\sqrt{a^2+|b|^2}+\sqrt{|b|^2+(1-a)^2})^2+1$ and $|b|^2= a(1-a)$. After some simple algebraic operation, we can get $a=0$, $b=1$ or $a=1$, $b=0$. Those two solutions can't be represented quantum states. Then we can obtain
\begin{equation}
\ln2C_2(\rho)+M(\rho)< 2\sqrt{a(1-a)}.
\end{equation}
\end{proof}
\begin{figure}\label{fig1}
  \includegraphics[scale=0.7]{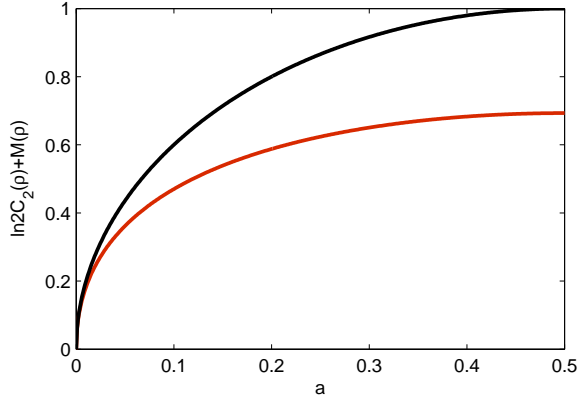}
\caption{ The values of $\ln2C_2(\rho)+M(\rho)$ and $2\sqrt{a(1-a)}$ as functions of $a$}
\label{Fig_1}
\end{figure}
In Fig.3, we plot the left-hand side of Eq.(41) by a red line and the right-hand side of Eq.(41) by a black line. From the Fig.3, we can see $\ln2C_2(\rho)+M(\rho)$ is smaller than $2\sqrt{a(1-a)}$. It can ensure the correctness of the Theorem 2.

\section{\bf summary}\label{IIIIII}
In this paper, we show that the  R\'{e}nyi $\alpha-$relative entropy of coherence does not satisfy condition C2b and extension of C2b for $\alpha\in(0,1)$  by presenting examples. Thus the measure of coherence induced by the  R\'{e}nyi $\alpha-$relative entropy can't be called coherence measure~\cite{Streltsov2}. Due to the R\'{e}nyi $\alpha$-relative entropy of coherence fulfill the condition C1 and C2a, so the R\'{e}nyi $\alpha$-relative entropy of coherence can be called as a coherence monotone quantifier. The R\'{e}nyi $\alpha$-relative entropy of coherence fulfills the minimal requirements to be a meaningful resource quantum quantifier for coherence~\cite{Streltsov2}, then we examine the trade-off relations between coherence and mixedness. Some properties are further exemplified with a single qubit for $\alpha=2$. Our findings complement the results present in Ref.~\cite{Chitambar4}.

\bigskip
\noindent {\bf Acknowledgments}  This work was supported by NSFC (Grant Nos. 11271237, 11671244, 61671280 and 11401361)and the Higher School Doctoral Subject
Foundation of Ministry of Education of China (Grant No. 20130202110001)and Fundamental Research Funds for the Central Universities(GK201502004 and 2016CBY003).

\end{document}